\patchcmd{\proof}{\indent}{}{}{}
\newcommand{\twodots}{.\kern-0.1em.}
\newcommand{\myldots}{\kern-0.05em.\kern-0.01em.\kern-0.01em.\kern0.01em}
\newcommand{\R}{\mathbb{R}}
\newcommand{\I}{\mathbb{I}}
\pgfplotsset{compat=newest}
\pgfplotsset{plot coordinates/math parser=false}
\def\munderbar#1{\underline{\sbox\tw@{$#1$}\dp\tw@\z@\box\tw@}}
\definecolor{olivegreen}{RGB}{128, 128, 0}
\newcommand\scalemath[2]{\scalebox{#1}{\mbox{\ensuremath{\displaystyle #2}}}}
\newtheorem{remark}{Remark}
\newtheorem{problem}{Problem}
\newtheorem{proposition}{Proposition}
\newtheorem{definition}{Definition}
\title{ \bf
Sample Efficient Certification of Discrete-Time Control Barrier Functions
}
\author{Sampath Kumar Mulagaleti, Andrea Del Prete  
\thanks{Financed by the European Union - Next Generation EU, Mission 4 Component 2 - CUP E53D23001130006.
S.K. Mulagaleti was with University of Trento, Italy, and A. Del Prete is with University of Trento, Italy. S.K.Mulagaleti is currently with IMT School for Advanced Studies Lucca, Italy. (Email: \url{s.mulagaleti@imtlucca.it},\url{andrea.delprete@unitn.it})}%
}
\begin{document}

\maketitle
\thispagestyle{empty}
\pagestyle{empty}

\begin{abstract}
Control Invariant (CI) sets are instrumental in certifying the safety of dynamical systems. Control Barrier Functions (CBFs) are effective tools to compute such sets, since the zero sublevel sets of CBFs are CI sets. However, computing CBFs generally involves addressing a complex robust optimization problem, which can be intractable. Scenario-based methods have been proposed to simplify this computation. Then, one needs to verify if the CBF actually satisfies the robust constraints. We present an approach to perform this verification that relies on Lipschitz arguments, and forms the basis of a certification algorithm designed for sample efficiency. Through a numerical example, we validated the efficiency of the proposed procedure.
\end{abstract}

\section{Introduction}
Guaranteeing reliable operation of dynamical systems is of paramount importance when deploying such systems in safety-critical applications. Safety is typically characterized by constraints on the system, such that guarantees can take the form of constraint-satisfaction certificates. These certificates can be provided using control invariant (CI) sets~\cite{Blanchini2015}, which are regions of the state space in which the system can be regulated indefinitely. Such sets can, in theory, be computed using a recursive procedure based on forward reachable sets~\cite{Bertsekas1972}. Unfortunately, a direct tractable application of this procedure is generally limited to linear settings~\cite{kerrigan2000robust}.
In the general setting, several approaches have been proposed to approximate the maximal CI set. Some approaches in this regard directly compute parameterized representations by enforcing the invariance conditions, e.g., using SOS programming~\cite{chesi2011domain}, DC programming~\cite{Fiacchini2010}, interval arithmetic~\cite{Bravo2005,Brown2023}, occupation measures~\cite{Korda2013}, zonotopes~\cite{Schäfer2024}, or a direct CI-set boundary approximation~\cite{VBOC}. Furthermore,~\cite{Mitchell2005} directly tackles the recursive procedure by formulating it using an optimal control framework. 
In recent years, Control Barrier Functions~\cite{Ames2019} (CBFs) have proven to be an effective way to characterize CI sets. They are Lyapunov-like functions that exhibit a bounded increase in CI level sets. For a review of recent developments, we refer to~\cite{Garg2024}. 

We focus on CBFs for discrete-time (DT) dynamical systems (DT-CBFs), which have gained recent interest following~\cite{agrawal2017discrete,Ahmadi2019}, with similar ideas having been presented in~\cite{Alberto2007}. While DT-CBFs have successfully been applied in many settings (e.g., model predictive control~\cite{Zeng2021}, safe reinforcement learning~\cite{Emam2022}) their synthesis remains an open problem. Some notable exceptions include the contributions in~\cite{Freire2023,wabersich2022predictive}. A standard approach to synthesize DT-CBFs involves enforcing inclusion constraints at finite samples. Then the question of certification arises, i.e., verifying generalizability of a candidate CBF.
This problem has been studied in the polynomial setting for continuous-time CBFs~\cite{Clark2021,Pond2023}. In~\cite{zhang2023exact}, an exact approach was developed for ReLU networks that exploits their piecewise affine structure. In \cite{Wang2023}, an SOS-based procedure was developed to verify CBFs for polynomial systems subject to input constraints. In the DT-CBF case, a recent approach in~\cite{chen2024verification} proposes to use a counter-example guided verification and synthesis procedure, in which finite termination guarantees are provided using a convex vector-valued CBF. Unfortunately, the procedure requires the solution of a global optimization problem at each iteration. A Branch-and-Bound algorithm to tackle this problem was developed in~\cite{Shakhesi2023}. An alternative to these approaches, of relevance to this paper, uses Lipschitz arguments.
In~\cite{Strong2024}, the conditions of~\cite{Alberto2007} were verified using a triangulation of the constraint set and leveraging Lipschitz arguments from the vertices. 

Our main contribution is a novel approach to verify DT-CBFs for general Lipschitz continuous systems. The central idea aims to enhance sampling efficiency by formalizing the intuition that the sampling density for verification can progressively decrease when moving from the boundary of the CI set to its interior. While the proposed idea is similar to the iteratively-refined grid-based approach in \cite{Tan2022}, we emphasize two important points. First, the approach of \cite{Tan2022} was developed for CT systems, while we deal with DT systems, which requires the development of new sampling density bounds. Second, our approach is complementary to~\cite{Tan2022}, in that their iterative algorithm can be used in lieu of our proposal as an automatic way to synthesize the sublevels. 
For completeness, we also discuss the synthesis problem, formulating it as a standard learning problem that also synthesizes a feasible control law. After a sample complexity analysis, we demonstrate the results on an example in which we exhibit the improved sample efficiency of our method.

\section{Problem Statement}
We consider discrete-time dynamical systems
\begin{align}
    \label{eq:system_OL}
    x^+ = \mathbf{f}(x,u)
\end{align}
where $\mathbf{f} : \R^{n_x} \times \R^{n_u} \to \R^{n_x}$ is Lipschitz continuous. The input is subject to constraints $u \in U$, and the state  to exclusion constraints $X :=\{x : x \notin X_{\mathrm{u}}\},$ where $X_{\mathrm{u}}$ is called an \textit{unsafe} set. We define the set of states from which we want to ensure safety as the \textit{safe} set $X_{\mathrm{s}} \subseteq X$. 
The set $X_{\mathrm{s}}$ could be trivially defined as the complement of $X_{\mathrm{u}}$, but this could make the safety certification problem unfeasible (there exist states in $X$ starting from which it is impossible not to reach $X_{\mathrm{u}}$), therefore, a gap between the two sets typically exists.

Our goal is to ensure that if $x_0 \in X_{\mathrm{s}}$, then $x_t \notin X_{\mathrm{u}}$ for all $ t\geq 0$.
We can achieve this goal by synthesizing a control invariant (CI) set that contains $X_{\mathrm{s}}$. 
A set $\mathbb{X}$ is called a CI set if and only if it satisfies the inclusions
\begin{align}
\label{eq:CI_conditions}
    \mathbb{X} \subseteq  X, &&
    \forall \ x \in \mathbb{X}, \ \exists \ u \in U \ : \ \mathbf{f}(x,u) \in \mathbb{X}.
\end{align}
Essentially, the conditions in~\eqref{eq:CI_conditions} state that if $x_0 \in \mathbb{X}$, then the subsequent states can be ensured to be safe, i.e., $x_t \notin X_{\mathrm{u}}$ with some $u_t \in U$. To ensure safety of $X_{\mathrm{s}}$ therefore we just need to enforce:
\begin{align}
\label{eq:safe_set}
    X_{\mathrm{s}} \subseteq \mathbb{X}.
\end{align}
Denoting $b$-sublevel sets of $h:\R^{n_x} \to \R$ as $S(b):=\{x:h(x) \leq b\}$, we parameterize CI sets as $0$-sublevel sets of a Lipschitz continuous function $h(x)$, i.e.,
\begin{align}
\label{eq:CI_parameterization}
    \mathbb{X}\leftarrow S(0):=\{x : h(x) \leq 0\}.
\end{align}
The inclusions $\mathbb{X} \subseteq X$ in~\eqref{eq:CI_conditions} and~\eqref{eq:safe_set} can be enforced by ensuring that
\begin{align}
\label{eq:safe_unsafe}
    h(x) > 0, \  \forall \ x \in X_{\mathrm{u}}, &&
    h(x) \leq 0, \ \forall \ x \in X_{\mathrm{s}}.
\end{align}
To ensure that the second inclusion in~\eqref{eq:CI_conditions} holds, we enforce that $h(x)$ is a control barrier function (CBF) adopting \cite[Def. 2]{Ahmadi2019}.

\begin{definition}
\label{defn_barrier}
    The function $h(x)$ is a CBF for System~\eqref{eq:system_OL} if there exists some $\alpha : \R \to \R$ satisfying
    \begin{align}
    \hspace{-5pt}
    \label{eq:alpha_properties}
    \alpha(y) \in [\min(0,y),\max(0,y)], 
    \end{align}
    and a set $D \subseteq \R^{n_x}$ satisfying the decay property
\begin{align}
    \label{eq:decay_property}
    \scalemath{0.99}{
    \exists u \in U : h(\mathbf{f}(x,u))-h(x) \leq -\alpha(h(x)), \forall x \in D.}
\end{align}
\end{definition}
\vspace{5pt}
In the following result, we demonstrate invariance of the set $\mathbb{X}$ parameterized as in~\eqref{eq:CI_parameterization} under CBF conditions.
\begin{proposition}
\label{prop:CBF}
    Suppose that the inequalities in~\eqref{eq:safe_unsafe} are satisfied, and the set $D$ (where~\eqref{eq:decay_property} holds) satisfies $\mathbb{X} \subseteq D.$
    Then, $\mathbb{X}$ is a CI set. Moreover, if $\alpha(y) \in (0,y]$ for $y>0$, then for any $x_0 \in D$, there exists a control sequence $\{u_t \in U, t\geq 0\}$ such that $\lim_{t \to \infty} x_t \in \mathbb{X}$.
\end{proposition}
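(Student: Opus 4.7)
The plan is to split the proof into the two claims stated in the proposition, and prove each by propagating the hypotheses through the definitions.

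For the first claim (CI property), I would first verify $\mathbb{X}\subseteq X$: by \eqref{eq:safe_unsafe}, any $x\in X_{\mathrm{u}}$ satisfies $h(x)>0$, so $x\notin S(0)=\mathbb{X}$; contrapositively $\mathbb{X}\subseteq X$. I would then verify the second inclusion in \eqref{eq:CI_conditions}. Fix $x\in\mathbb{X}$, so $h(x)\le 0$. Since $\mathbb{X}\subseteq D$, the decay property \eqref{eq:decay_property} provides $u\in U$ with $h(\mathbf{f}(x,u))\le h(x)-\alpha(h(x))$. From \eqref{eq:alpha_properties} applied to $y=h(x)\le 0$, one has $\alpha(h(x))\in[h(x),0]$, so $\alpha(h(x))\ge h(x)$, hence $h(x)-\alpha(h(x))\le 0$. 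Therefore $h(\mathbf{f}(x,u))\le 0$, i.e.\ $\mathbf{f}(x,u)\in\mathbb{X}$, which is precisely the invariance condition.

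For the second claim (convergence), I would construct the control sequence by cases. If $x_0\in\mathbb{X}$ we are done: by the first part there exists $u_0\in U$ keeping $x_1\in\mathbb{X}$, and by induction the whole trajectory remains in $\mathbb{X}$, so the limit (when it exists along any subsequence) is in $\mathbb{X}$. If $x_0\in D\setminus\mathbb{X}$, so $h(x_0)>0$, pick $u_0\in U$ from \eqref{eq:decay_property}. Under the stronger hypothesis $\alpha(y)\in(0,y]$ for $y>0$, one gets $0<\alpha(h(x_0))\le h(x_0)$, which yields both $h(x_1)<h(x_0)$ (strict decrease) and $h(x_1)\le h(x_0)-\alpha(h(x_0))\le h(x_0)$, i.e.\ $x_1$ lies in the sublevel set $S(h(x_0))$; provided $D$ is taken to be a sublevel set (the natural choice, and indeed the one that makes the premise $\mathbb{X}\subseteq D$ meaningful), this ensures $x_1\in D$ and the argument can be iterated.

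Once the iteration is established, I would argue convergence as follows. Either some $x_T\in\mathbb{X}$, in which case we switch to the invariant controller from part one and the limit lies in $\mathbb{X}$; or $h(x_t)>0$ for every $t$. In that case telescoping the decay inequality gives
\begin{equation*}
    \sum_{t=0}^{T-1}\alpha(h(x_t))\le h(x_0)-h(x_T)\le h(x_0),
\end{equation*}
uniformly in $T$, so $\sum_{t\ge 0}\alpha(h(x_t))<\infty$ and in particular $\alpha(h(x_t))\to 0$. Combined with $\alpha(y)\in(0,y]$ on $y>0$ and the monotonicity of $\{h(x_t)\}$, this forces $h(x_t)\downarrow 0$, hence every limit point of $\{x_t\}$ lies in $\overline{\mathbb{X}}=\mathbb{X}$.

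The main obstacle I anticipate is the bookkeeping in step two: ensuring that the iterates remain in $D$ so that the decay property keeps applying. This is immediate if $D$ is interpreted as a sublevel set of $h$, since we showed $h(x_{t+1})\le h(x_t)$, but if $D$ is a more general set one must either add a forward-invariance assumption on $D$ or explicitly argue that the constructed trajectory stays inside. Deducing $h(x_t)\to 0$ from $\alpha(h(x_t))\to 0$ also requires a mild monotonicity/positivity argument on $\alpha$ (or continuity) which I would state as a companion observation rather than an additional assumption.
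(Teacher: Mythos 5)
Your proof of the first claim is correct and matches the paper's argument essentially line for line: from $h(x)\le 0$ and \eqref{eq:alpha_properties} one gets $h(x)-\alpha(h(x))\le 0$, and \eqref{eq:decay_property} then supplies a $u\in U$ with $h(\mathbf{f}(x,u))\le 0$. (You additionally verify $\mathbb{X}\subseteq X$ from \eqref{eq:safe_unsafe}, which the paper states as a hypothesis and does not re-derive; that is harmless.) For the second claim the paper's own proof is a one-liner --- it establishes strict decrease $h(\mathbf{f}(x,u))<h(x)$ on $D\setminus\mathbb{X}$ and then says ``telescoping this observation concludes the proof'' --- whereas you correctly point out that this is not yet a proof: strict decrease of $h(x_t)$ alone does not force the trajectory into $\mathbb{X}$, and one must also ensure the iterates remain in $D$ so that \eqref{eq:decay_property} keeps applying. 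Your summability argument $\sum_{t}\alpha(h(x_t))\le h(x_0)$, which gives $\alpha(h(x_t))\to 0$ and hence (via continuity or monotonicity of $\alpha$; automatic for the linear choice \eqref{eq:alpha_linear} the paper adopts) $h(x_t)\to 0$, is exactly the missing content, and your caveats about $D$ being a sublevel set (or otherwise forward invariant along the constructed trajectory) and about $\mathbb{X}$ being closed so that limit points land in it are legitimate gaps in the paper's argument rather than defects in yours. In short: same skeleton, but your version of part two is the rigorous one.
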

\begin{proof}
    Take any $x \in \mathbb{X}$, such that $h(x) \leq 0$. From~\eqref{eq:alpha_properties}, we have $h(x) \leq \alpha(h(x)) \leq 0$, such that $h(x)-\alpha(h(x)) \leq 0$. Then, from $\mathbb{X} \subseteq D$ and~\eqref{eq:decay_property}, there exists some $u \in U$ verifying $h(\mathbf{f}(x,u)) \leq 0$, or equivalently, $\mathbf{f}(x,u) \in \mathbb{X}$ concluding the proof of the first part. For the second part, take any $x \in D\setminus \mathbb{X}$, such that $h(x)>0$. Since $\mathbb{X} \subseteq D$, we have $0 < \alpha(h(x)) \leq h(x)$, such that $h(x)-\alpha(h(x))<h(x)$. From~\eqref{eq:decay_property}, there exists a $u \in U$ such that $h(\mathbf{f}(x,u))<h(x)$.
    Telescoping this observation concludes the proof.
\end{proof}

In this paper, we parameterize 
\begin{align}
\label{eq:alpha_linear}
    \alpha(x)=\alpha x
\end{align}
for some $\alpha \in [0,1]$ for simplicity of presentation. Note that the results can be extended to more general functions verifying~\eqref{eq:alpha_properties}. Given this setup,
we state the problem of synthesizing CI sets as follows:
\begin{problem}
\label{eq:ideal_problem}
    Given System~\eqref{eq:system_OL}, sets $X_{\mathrm{s}}, X_{\mathrm{u}}, D \subseteq \R^{n_x}$ and $U \subseteq \R^{n_u}$ verifying $X_\mathrm{s} \subseteq X \subseteq D$, and scalar
    $\alpha \in [0,1]$, synthesize $h(x)$ that verifies~\eqref{eq:safe_unsafe} and~\eqref{eq:decay_property}. 
\end{problem}

While Problem~\ref{eq:ideal_problem} can be tackled directly for certain parameterizations of $\mathbf{f}$ and the sets defining the problem, it is generally intractable to enforce the inclusions in~\eqref{eq:safe_unsafe} and~\eqref{eq:decay_property}. Hence, a typical approach involves sampling from the sets, and enforcing the inclusions at these samples. Then, the question of verification arises, i.e., how can one ensure that a solution obtained with a sampling-based approach solves Problem~\ref{eq:ideal_problem}. In this spirit, we address the following problem in this paper.
\begin{problem}
\label{eq:sampled_problem} 
    (\textbf{Synthesis}) Given finite sample sets $Z_{\mathrm{s}}:=\{x_i \in X_{\mathrm{s}}, i \in \mathbb{I}_{\mathrm{s}}\}$, $Z_{\mathrm{u}}:=\{x_i \in X_{\mathrm{u}}, i \in \mathbb{I}_{\mathrm{u}}\}$ and $Z_{\mathrm{d}}:=\{x_i \in D, i \in \mathbb{I}_{\mathrm{d}}\}$ where $\mathbb{I}_{\mathrm{s}}$, $\mathbb{I}_{\mathrm{u}}$ and $\mathbb{I}_{\mathrm{d}}$ represent a finite collection of integers, the set $U$, and scalars 
    $\alpha \in [0,1]$ and $\delta \geq 0$, synthesize functions $h(x)$ and $u(x)$ such that the following inequalities hold:
    \begin{subequations}
    \label{eq:sampling_BF}
    \begin{align}
     \label{eq:sampling_BF:1}
      h(x) > 0, \ \forall \ x \in Z_{\mathrm{u}}, \quad h(x) \leq 0, \ \forall \ x \in Z_{\mathrm{s}}, \\
      \label{eq:sampling_BF:2}
      h(\mathbf{f}(x,u(x)))-(1-\alpha)h(x) \leq -\delta, \ \forall \ x \in Z_{\mathrm{d}}, \\
      \label{eq:sampling_BF:3}
      u(x) \in U, \ \forall \ x \in Z_{\mathrm{d}}
    \end{align}
    \end{subequations}
    (\textbf{Verification}) Derive conditions under which the synthesized $0$-sublevel set $S(0):=\{x : h(x) \leq 0\}$ solves Problem~\ref{eq:ideal_problem}, i.e., it is a CI set.
\end{problem}

In Problem~\ref{eq:sampled_problem}, we introduce a control law parameterization $u(x)$ to compute feasible control inputs, and a scalar $\delta \geq 0$ that will aid us in the verification. For synthesis, we propose a simple learning approach. For the verification problem, we present a sampling-based approach the exploits Lipschitz continuity, along with a barrier relaxation property in a manner similar to~\cite{boffi2020learning} to derive a sample-efficient certification methodology.
\section{Synthesis problem}
We formulate a simple learning problem to synthesize the functions $h(x)$ and $u(x)$ verifying~\eqref{eq:sampling_BF}. To this end, we parameterize the candidate barrier function as a feedforward neural network (FNN) with parameters $\theta_h \in \R^{\theta_h}$, and we denote $h(x) \leftarrow h(x;\theta_h)$. Then, assuming that the set $U$ is a polytope,
we parameterize the control law as the optimizer of the convex QP
\begin{align}
    \label{eq:u_parameterization}
    u(x;\theta_u):=\arg\min_{u \in U} \ \frac{1}{2} u^{\top} Q u + c(x;\theta_u)^{\top} u,
\end{align}
where $Q \succ 0$, and $c(x;\theta_u) : \R^{n_x} \to \R^{n_u}$ is an FNN with parameters $\theta_u \in \R^{n_{\theta_u}}$. 
For any $x \in \R^{n_x}$,~\eqref{eq:u_parameterization} is feasible since $\theta_u$ only enters through the objective. Furthermore, by the implicit function theorem, the optimizer is a differentiable function of $\theta_u$ under regularity assumptions on $U$, and continuity of the FNN $c(x;\theta_u)$ in $\theta_u$. 
Given this parameterization, the learning problem can be formulated as
\begin{subequations}
\label{eq:learning_problem_constraints}
\begin{align}
    &\min_{\theta_h,\theta_u} \ \|(\theta_h,\theta_u)\|_2^2 \\
    & \ \text{s.t.} \ \  h(x;\theta_h) \geq l_{\mathrm{u}}, \ \forall \ x \in Z_{\mathrm{u}}, \\ 
    & \ \qquad h(x;\theta_h) \leq 0, \ \forall \ x \in Z_{\mathrm{s}}, \\
    & \ \qquad \Delta h(x;\theta_h,\theta_u) \leq -\delta, \ \forall \ x \in Z_{\mathrm{d}},
\end{align}
\end{subequations}
where $l_{\mathrm{u}}>0$ is a user-defined scalar, and we denote
\begin{align*}
\scalemath{0.95}{
    \Delta h(x;\theta_h,\theta_u):=h(\mathbf{f}(x,u(x;\theta_u));\theta_h)-(1-\alpha)h(x;\theta_h).}
\end{align*}
Any nonlinear programming (NLP) solver~\cite{NoceWrig06} can be used to tackle Problem~\eqref{eq:learning_problem_constraints}. However, standard methods might face a severe computational challenge if the number of constraints is very large. One approach to tackle this issue involves penalizing constraint violation, and solving the resulting unconstrained problem. 
We denote $\bm{\theta}:=(\theta_h,\theta_u)$, and define the functions $\mathbf{h}_{\mathrm{u}}(x;\bm{\theta}) := \max\{l_{\mathrm{u}}-h(x;\theta_h),0\}$, $\mathbf{h}_{\mathrm{s}}(x;\bm{\theta}) := \max\{h(x;\theta_h),0\}$, and $\mathbf{h}_{\mathrm{d}}(x;\bm{\theta}):=\max\{\Delta h(x;\theta_h,\theta_u)+\delta,0\},$
using which we write the penalized problem as%
\begin{align}
    \label{eq:learning_problem_penalized}
    \min_{\bm{\theta}} \ \ &\tau_{\mathrm{r}} \|\bm{\theta}\|_2^2 + \frac{\tau_{\mathrm{u}}}{|\I_{\mathrm{u}}|} \sum_{x \in Z_{\mathrm{u}}} \mathbf{h}_{\mathrm{u}}(x;\bm{\theta}) + \\
     & \qquad \hspace{20pt} \frac{\tau_{\mathrm{s}}}{|\I_{\mathrm{s}}|} \sum_{x \in Z_{\mathrm{s}}} \mathbf{h}_{\mathrm{s}}(x;\bm{\theta}) + \frac{\tau_{\mathrm{d}}}{|\I_{\mathrm{d}}|} \sum_{x \in Z_{\mathrm{d}}} \mathbf{h}_{\mathrm{d}}(x;\bm{\theta}), \nonumber
\end{align}
where $\tau_{\mathrm{r}} \geq 0$ is the regularization parameter, and $\tau_{\mathrm{u}}, \tau_{\mathrm{s}}, \tau_{\mathrm{d}} >0$ are the penalty parameters. For high-enough penalty parameters, a stationary point of Problem~\eqref{eq:learning_problem_penalized} is feasible for Problem~\eqref{eq:learning_problem_constraints}. Furthermore, Problem~\eqref{eq:learning_problem_penalized} is separable and hence can be solved using stochastic gradient algorithms such as Adam~\cite{kingma2017adammethodstochasticoptimization}. Alternatively, solutions can be computed using quasi-Newton methods such as the L-BFGS algorithm~\cite{Byrd1995}.
We found it useful to warm-start a solver tackling Problem~\eqref{eq:learning_problem_penalized} with the solution of the classification problem derived by setting $\tau_d=0$ in Problem~\eqref{eq:learning_problem_penalized}.

\section{Verification problem}
In this section, we address the second part of Problem~\ref{eq:sampled_problem} for a candidate barrier function $h(x)$ obtained by solving the synthesis problem, e.g., Problem~\eqref{eq:learning_problem_constraints}. We assume that the candidate function satisfies the state constraints in~\eqref{eq:safe_unsafe}, such that the problem boils down to verifying~\eqref{eq:decay_property}. We remark that the satisfaction of the constraints in~\eqref{eq:safe_unsafe} can be checked using existing approaches, e.g., bound propagation \cite{gowal2019effectivenessintervalboundpropagation,fazlyab2023}, etc. These approaches, however, cannot check~\eqref{eq:decay_property} without further restrictions.
We also assume that the input is selected using a feasible control law, e.g.,~\eqref{eq:u_parameterization}, such that $u \in U$ is always guaranteed in the closed-loop system
\begin{align}
\label{eq:system}
    x^+=f(x):=\mathbf{f}(x,u(x)).
\end{align}
As per Definition~\ref{defn_barrier} and Proposition~\ref{prop:CBF}, a function $h(x)$ is a barrier function, and $S(0)$ is the corresponding CI set, if there exists some scalar $\bar{\alpha} \in [0,1]$ verifying
\begin{align}
\label{eq:condition_to_verify}
\hspace{-5pt}
    r(x):=h(f(x))-h(x) \leq -\bar{\alpha}h(x), \  \forall \ x \in S(0).
\end{align}
Note that the decay constant $\bar{\alpha}$ for verifying the barrier property does not necessarily have to equal $\alpha$ used for synthesizing $h(x)$.
We now present the framework under which we develop the verification methodology.

\subsection{Verification framework}
Suppose that a function $h(x)$ is obtained by solving the synthesis problem with a decay rate $\alpha \in [0,1]$. Let us define a sequence of scalars $\{\gamma_0,\cdots,\gamma_q\}$ verifying
\begin{align}
\label{eq:gamma_conditions}
    \gamma_0:=\min_{x \in S(0)} h(x), && \gamma_{i-1} < \gamma_{i}, && \gamma_q = 0,
\end{align}
for $i \in \{1,\cdots,q\}$. Then, define the sets
\begin{align}
\label{eq:segments}
    C_i:=\{x : h(x) \in [\gamma_{i-1},\gamma_i]\}, && i \in \{1,\cdots,q\},
\end{align}
and suppose that we have access to samples
\begin{align}
\label{eq:samples}
    \mathcal{D}_i:=\{x_j^i \in C_i \ : \ j \in \{1,\cdots,N_i\}\}
\end{align}
in each segment $C_i$ for $i\in\{1,\cdots,q\}$. Finally, assume that for some scalar $\delta \geq 0$, the condition
\begin{align}
    \label{eq:verified_condition}
    \hspace{-4pt}
    r(x) \leq -\alpha h(x) -\delta, \ \forall \ x \in \mathcal{D}_i, \ i \in \{1,\cdots,q\}
\end{align}
holds. Given~\eqref{eq:verified_condition}, we tackle the following problem as an alias for Part 2 of Problem~\ref{eq:sampled_problem}. As per the definition of $C_i$ in~\eqref{eq:segments}, the $0$-sublevel set of $h(x)$ is $S(0) = \cup_{i=1}^q C_i.$
\begin{problem}
    \label{main_problem}
    Derive conditions on the datasets $\mathcal{D}_i$ for $i \in \{1,\cdots,q\}$ such that for given $\bar{\alpha} \in [0,1]$,  verification of~\eqref{eq:verified_condition} implies verification of~\eqref{eq:condition_to_verify}. 
\end{problem}

Problem \ref{main_problem} requires us to derive sampling density bounds on datasets $\mathcal{D}_i$ such that the observation in \eqref{eq:verified_condition} can be extrapolated to conclude that \eqref{eq:condition_to_verify} holds.
\begin{proposition}
\label{prop:main_result}
    Suppose that for every $i \in \{1,\cdots,q\}$, the sample set $\mathcal{D}_i$ forms an $\epsilon_i$-net over $C_i$, i.e.,
    \begin{align}
        \label{eq:eps_net}
        \forall \ x \in C_i, \ \exists \ \bar{x} \in \mathcal{D}_i \ : \ \|x-\bar{x}\| \leq \epsilon_i.
    \end{align}
    Then, if~\eqref{eq:verified_condition} is satisfied, $h(x)$ is a barrier function for System~\eqref{eq:system} with decay rate $\bar{\alpha}\in [\alpha, 1]$, i.e., it satisfies
    \begin{align}
        \label{eq:relaxed_barrier}
        r(x) \leq -\bar{\alpha}h(x), && \forall x \in S(0)
    \end{align}
    if the sampling resolution parameter satisfies
    \begin{align}
        \label{eq:epsilon_condition}
        \epsilon_i \leq \zeta(\bar{\alpha},\delta, \gamma_i):=\cfrac{\delta + (\bar{\alpha} - \alpha) |\gamma_i|}{L_hL_f+(1-\bar{\alpha}) L_h}
    \end{align}
    for all $i \in \{1,\cdots,q\}$, where $L_h$ and $L_f$ upper-bound the Lipschitz constants of the functions $h(x)$ and $f(x)$.
\end{proposition}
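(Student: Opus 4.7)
The goal is to bound $r(x) + \bar{\alpha}h(x)$ from above by $0$ for every $x \in S(0) = \bigcup_{i=1}^{q} C_i$, extrapolating from the sampled inequality~\eqref{eq:verified_condition} via Lipschitz continuity. The plan is to introduce the auxiliary function
$$g(y) := h(f(y)) - (1-\bar{\alpha})h(y),$$
which equals $r(y) + \bar{\alpha}\,h(y)$, and to show that $g(x) \leq 0$ for every $x \in S(0)$. For any such $x$, I would first locate its segment $C_i$, then pick the nearest sample $\bar{x} \in \mathcal{D}_i$ at distance at most $\epsilon_i$ (guaranteed by~\eqref{eq:eps_net}), and decompose $g(x) = g(\bar{x}) + [g(x) - g(\bar{x})]$.

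The second term is handled by a routine composition of Lipschitz constants: $g$ is $(L_h L_f + (1-\bar{\alpha})L_h)$-Lipschitz, which yields exactly the denominator of $\zeta$ in~\eqref{eq:epsilon_condition}. For the first term, substituting~\eqref{eq:verified_condition} at $\bar{x}$ gives $g(\bar{x}) \leq (\bar{\alpha}-\alpha)\,h(\bar{x}) - \delta$. The key structural observation is that the sequence in~\eqref{eq:gamma_conditions} is increasing with $\gamma_q = 0$, so every $\gamma_i \leq 0$; membership $\bar{x} \in C_i$ gives $h(\bar{x}) \leq \gamma_i \leq 0$, and together with $\bar{\alpha} \geq \alpha$ this converts $(\bar{\alpha}-\alpha)\,h(\bar{x})$ into the negative slack $-(\bar{\alpha}-\alpha)|\gamma_i|$. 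Chaining the two bounds yields
$$g(x) \leq (L_h L_f + (1-\bar{\alpha})L_h)\,\epsilon_i - \delta - (\bar{\alpha}-\alpha)|\gamma_i|,$$
which is nonpositive precisely under hypothesis~\eqref{eq:epsilon_condition}; since $x \in S(0)$ was arbitrary, this gives~\eqref{eq:relaxed_barrier}.

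The main obstacle is choosing the right object to extend. Applying Lipschitz continuity to $r(x)$ alone would inflate the denominator to $L_h(L_f + 1)$ and, more importantly, discard the slack $(\bar{\alpha}-\alpha)|\gamma_i|$ that grows as one moves into the interior of $S(0)$. It is exactly this interior slack that encodes the sample-efficiency claim of the paper, since deeper sublevels (more negative $\gamma_i$) tolerate a larger $\epsilon_i$. Absorbing the $\bar{\alpha}\,h(x)$ term into $g$ before invoking the Lipschitz bound is what exposes this slack and simultaneously shrinks the denominator's $h$-contribution from $L_h$ to $(1-\bar{\alpha})L_h$; after this reformulation, the rest of the argument is essentially bookkeeping with signs.
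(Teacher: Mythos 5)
Your proposal is correct and follows essentially the same route as the paper's proof: the paper likewise bounds $h(f(x))-(1-\bar{\alpha})h(x)$ by passing to the nearest sample $\bar{x}$ via the Lipschitz constants of $h\circ f$ and $h$, applies~\eqref{eq:verified_condition} at $\bar{x}$, and uses $h(\bar{x})\leq\gamma_i\leq 0$ to obtain the slack $-(\bar{\alpha}-\alpha)|\gamma_i|$ before enforcing nonpositivity of the right-hand side. Packaging the argument as Lipschitz continuity of $g(y)=h(f(y))-(1-\bar{\alpha})h(y)$ is only a cosmetic reorganization of the same inequalities.
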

\begin{proof}
    For a given $i \in \{1,\cdots,q\}$ and any $x \in C_i$, there exists some $\bar{x} \in \mathcal{D}_i$ verifying the inequalities
\begin{align}
        h(x^+)-(1-\bar{\alpha})h(x) & \label{eq:main_inequalities} \\
        &  \hspace{-80pt}  \leq  h(\bar{x}^+)+L_hL_f \epsilon_i-(1-\bar{\alpha})(h(\bar{x})-L_h \epsilon_i) \nonumber \\
        & \hspace{-80pt} = (h(\bar{x}^+)-h(\bar{x}))+\bar{\alpha}h(\bar{x})+(L_hL_f+(1-\bar{\alpha})L_h)\epsilon_i \nonumber \\
        & \hspace{-80pt} \leq (\bar{\alpha}-\alpha)h(\bar{x})-\delta + (L_hL_f+(1-\bar{\alpha})L_h)\epsilon_i \nonumber \\
        & \hspace{-80pt} \leq (\bar{\alpha}-\alpha)\gamma_i-\delta+ (L_hL_f+(1-\bar{\alpha})L_h)\epsilon_i \nonumber
           \end{align}
    where the first inequality follows from Lipschitz continuity of $h$ and $f$, the second inequality from~\eqref{eq:verified_condition}, and the final equality from the fact that $h(\bar{x})\leq \gamma_i$ since $\bar{x} \in \mathcal{D}_i$. Enforcing the right-hand-side to be non-positive concludes the proof.
\end{proof}

From Proposition~\ref{prop:main_result}, we observe that the choice of $\bar{\alpha}$ reflects the goal of the certification procedure. Selecting $\bar{\alpha}=1$ makes sense if our goal is to certify invariance of the set $S(0)$. This results in larger bounds on $\epsilon_i$, hence fewer samples are required. On the other hand, by choosing $\bar{\alpha}=\alpha$ we verify the barrier property with the original decay rate, resulting in smaller bounds on $\epsilon_i$, that are independent of $i$. Finally, we note that since $\{\gamma_0,\cdots,\gamma_q\}$ is a strictly increasing sequence, for any $x \in C_i$ and $i \in \{1,\cdots,q\}$, there always exists some $\bar{x} \in C_i$ such that $\|x-\bar{x}\| \in (0,\epsilon_i]$.

In the sequel, we present the sample complexity bounds to certify the $\epsilon_i$-net conditions.
Before we proceed, we present a brief discussion on the significance of the scalar $\delta$. In particular, we show that for $\delta>0$, a subset of $S(0)$ is an invariant set. We present the first result for the case when $h(x^+)-(1-\alpha)h(x)\leq -\delta$ holds for all $x \in S(0)$. Then, we present a new result when the decay condition is only verified at finite samples as in~\eqref{eq:verified_condition}, with the samples verifying~\eqref{eq:epsilon_condition}.

\begin{proposition}
\label{prop:delta_significance}
If $v(x):=r(x)+\alpha h(x) \leq -\delta$ holds for all $x \in S(0)$ for some $\delta>0$, then $S(-\delta/\alpha)$ is an invariant subset of the invariant set $S(0)$.
\end{proposition}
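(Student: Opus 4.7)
My plan is to carry out a short algebraic calculation: I would rewrite the hypothesis $v(x) \leq -\delta$ as an explicit one-step upper bound on $h(f(x))$, and then verify the defining inequality of each sublevel set in turn. Since $r(x) = h(f(x)) - h(x)$ and $v(x) = r(x) + \alpha h(x)$, the hypothesis is equivalent to
\[
h(f(x)) \leq (1-\alpha)\, h(x) - \delta \quad \text{for all } x \in S(0).
\]

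First, I would reconfirm invariance of the outer set $S(0)$ from this one-step bound, which is needed so that the phrase ``invariant subset of the invariant set $S(0)$'' in the statement is justified under the hypothesis alone. For any $x \in S(0)$ we have $h(x) \leq 0$ and $1-\alpha \in [0,1]$, so $(1-\alpha)h(x) \leq 0$ and the one-step bound gives $h(f(x)) \leq -\delta \leq 0$, placing $f(x)$ in $S(0)$. Next, for the sharper inner claim, I would pick any $x \in S(-\delta/\alpha)$. Because $-\delta/\alpha < 0$, this $x$ also lies in $S(0)$, so the hypothesis applies; substituting $h(x) \leq -\delta/\alpha$ into the one-step inequality yields
\[
h(f(x)) \leq (1-\alpha)\!\left(-\frac{\delta}{\alpha}\right) - \delta = -\frac{\delta}{\alpha} + \delta - \delta = -\frac{\delta}{\alpha},
\]
so $f(x) \in S(-\delta/\alpha)$, and the inner sublevel is invariant.

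I do not expect a substantial technical obstacle: the result is essentially the telescoping observation that the slack $\delta$ in the decay inequality carves out a strictly tighter invariant sublevel. The one subtlety worth flagging is that the statement implicitly assumes $\alpha > 0$ so that $-\delta/\alpha$ is well-defined; the degenerate case $\alpha = 0$ would have to be interpreted as the empty set (or excluded from the statement). I would also remark that since $f$ denotes the closed-loop map from~\eqref{eq:system}, the input constraint $u \in U$ is automatically respected throughout the argument, so no separate admissibility check is needed.
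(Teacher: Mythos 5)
Your proof is correct, and it reaches the conclusion by a slightly more direct route than the paper. The paper telescopes the one-step bound $h(f(x)) \leq (1-\alpha)h(x) - \delta$ over $t$ steps to obtain the explicit trajectory bound $h(x_t) \leq (1-\alpha)^t h(x_0) - \sum_{i=0}^{t-1}(1-\alpha)^i\delta$, then substitutes $h(x_0) \leq -\delta/\alpha$ and evaluates the geometric sum $\sum_{i=0}^{t-1}\alpha(1-\alpha)^i = 1-(1-\alpha)^t$ to conclude $h(x_t) \leq -\delta/\alpha$ for all $t$. You instead observe that $-\delta/\alpha$ is precisely the fixed point of the affine map $y \mapsto (1-\alpha)y - \delta$, so a single one-step check $(1-\alpha)(-\delta/\alpha) - \delta = -\delta/\alpha$ establishes invariance of $S(-\delta/\alpha)$ directly; invariance only ever requires the one-step verification, so nothing is lost. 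The paper's telescoped form does carry a bit more information (a quantitative decay estimate for $h(x_t)$ from arbitrary $x_0 \in S(0)$, which foreshadows the attractivity discussion around Proposition~\ref{prop:sequence_result}), whereas your version is leaner for the stated claim. Two of your side remarks are also worth keeping: the explicit check that $S(0)$ itself is invariant under the strengthened hypothesis (the paper's statement calls $S(0)$ "the invariant set" without re-deriving it here), and the observation that $\alpha = 0$ must be excluded for $-\delta/\alpha$ to make sense, a degenerate case the paper's $\alpha \in [0,1]$ parameterization does not rule out.
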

\begin{proof}
 For any $x_0 \in S(0)$ and $t \geq 0$, we have $ h(x_{t})\leq (1-\alpha)^t h(x_0)-\sum_{i=0}^{t-1}(1-\alpha)^i \delta.$
    For any $x_0$ such that $h(x_0)\leq -\delta/\alpha$, we have
    \begin{align}
        \label{eq:attractor_bound}
        h(x_t) \leq -\frac{\delta}{\alpha}\left((1-\alpha)^t+\sum_{i=0}^{t-1} \alpha(1-\alpha)^{i} \right).
    \end{align}
    For any $t\geq 1$, since $\alpha \in [0,1]$, we have $\sum_{i=0}^{t-1} \alpha(1-\alpha)^{i} = 1-(1-\alpha)^t$, such that $h(x_t)\leq -\delta/\alpha$.
\end{proof}

We now derive an analogous invariant set if $h(x)$ is certified following Proposition~\ref{prop:main_result}. 
\begin{proposition}
\label{prop:sequence_result}
  For any sequence $\{\gamma_0,\cdots,\gamma_q\}$ verifying~\eqref{eq:gamma_conditions} and $\bar{\alpha} \in [\alpha,1]$, if the sets $\mathcal{D}_i$ verify~\eqref{eq:eps_net} and~\eqref{eq:epsilon_condition}, and~\eqref{eq:verified_condition} is verified, then $S(\hat{\gamma})$ is an invariant subset of the invariant set $S(0)$, where
  \begin{align}
  \label{eq:constants_definition}
      \hat{\gamma}:=-\left(\cfrac{1-\bar{\alpha}}{\alpha(1-\bar{\alpha})+\bar{\alpha}L_f}\right) \delta.
  \end{align}
\end{proposition}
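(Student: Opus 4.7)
The plan is three-fold: (a) extract a per-segment affine bound of the form $h(x^+) \leq A\gamma_i + B$, (b) algebraically identify $\hat\gamma$ as the scalar fixed point of $y\mapsto Ay+B$, and (c) close the invariance argument by a telescoping step analogous to the proof of Proposition~\ref{prop:delta_significance}.

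\textbf{Step 1 (segment bound).} For any $x\in C_i$, I would pick the nearest sample $\bar x\in \mathcal{D}_i$ guaranteed by the $\epsilon_i$-net property and chain three inequalities: Lipschitz continuity of $h\circ f$ giving $h(x^+)\leq h(\bar x^+)+L_hL_f\epsilon_i$; the verified sampled decay~\eqref{eq:verified_condition} giving $h(\bar x^+)\leq (1-\alpha)h(\bar x)-\delta$; and the segment bound $h(\bar x)\leq\gamma_i$, which is admissible because $1-\alpha\geq 0$ and $\bar x\in C_i$. Combining them produces
$$h(x^+) \leq (1-\alpha)\gamma_i - \delta + L_hL_f\epsilon_i.$$

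\textbf{Step 2 (saturating $\epsilon_i$).} Substituting the tight bound $\epsilon_i\leq\zeta(\bar\alpha,\delta,\gamma_i)$ from~\eqref{eq:epsilon_condition} and introducing the shorthand $\sigma:=L_f/(L_f+1-\bar\alpha)\in[0,1)$, the Lipschitz term collapses to $\sigma(\delta-(\bar\alpha-\alpha)\gamma_i)$; note that the factor $L_h$ cancels on substitution, which is why the resulting $\hat\gamma$ is independent of $L_h$. Regrouping,
$$h(x^+) \leq A\gamma_i + B,\qquad A := (1-\alpha)-\sigma(\bar\alpha-\alpha),\qquad B := (\sigma-1)\delta.$$
Expanding $\sigma$ and simplifying, the scalar fixed point $B/(1-A)$ of the affine map $y\mapsto Ay+B$ reduces exactly to the closed form for $\hat\gamma$ in~\eqref{eq:constants_definition}, which is the key algebraic target of the proof.

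\textbf{Step 3 (invariance).} For any $x\in S(\hat\gamma)$ lying in some $C_i$ with $\gamma_i\leq\hat\gamma$, combining the bound from Step~1 with the fixed-point identity $\hat\gamma=A\hat\gamma+B$ gives $h(x^+)\leq A\gamma_i+B\leq A\hat\gamma+B=\hat\gamma$, and forward invariance of $S(\hat\gamma)$ follows by induction on $t$, mirroring the telescoping step of Proposition~\ref{prop:delta_significance}. The argument requires $\hat\gamma$ to be resolved by the sequence $\{\gamma_j\}$, which can always be arranged by inserting $\hat\gamma$ as an additional level; since $\zeta(\bar\alpha,\delta,\cdot)$ is monotone in $|\gamma|$, the refinement remains compatible with the $\epsilon$-net requirement~\eqref{eq:epsilon_condition}.

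\textbf{Main obstacle.} I expect Step~2 to be the bookkeeping-heavy part: verifying that $B/(1-A)$ collapses to the compact form in~\eqref{eq:constants_definition} requires careful manipulation of $\sigma$, $\bar\alpha-\alpha$, and $1-\bar\alpha$. Two limits provide useful sanity checks: $L_f\to 0$ should recover $\hat\gamma\to-\delta/\alpha$, consistent with Proposition~\ref{prop:delta_significance}, while $\bar\alpha\to 1$ should yield $\hat\gamma\to 0$, reflecting that maximum relaxation of the decay rate leaves no room for a strictly smaller invariant subset.
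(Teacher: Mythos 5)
Your proposal is correct and follows essentially the same route as the paper's proof: the chained Lipschitz/sampled-decay bound yields exactly the affine per-segment map with $A=\mathbf{a}$ and $B=\mathbf{b}\delta$, and $\hat{\gamma}=B/(1-A)$ is the same fixed point the paper obtains as the limit of the monotone sequence $\gamma_{i_{t+1}}=\mathbf{a}\gamma_{i_t}+\mathbf{b}\delta$. Your explicit remark that $\hat{\gamma}$ must be resolved by the level sequence (e.g., by inserting it as $\gamma_q$) is a point the paper only addresses in the discussion following the proposition, so it is a welcome clarification rather than a deviation.
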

\begin{proof}
    For any initial state $x_0 \in S(\hat{\gamma})$, define the index $i_0 :=\min\{\gamma_i \in \{1,\cdots,q\} : h(x_0) \leq \gamma_i\}$, such that $x_0 \in C_{i_0}$. By assumption, there exists some $\bar{x} \in \mathcal{D}_{i_0}$ such that $\|x-\bar{x}\| \leq \epsilon_{i_0}$, with $\epsilon_{i_0}$ verifying~\eqref{eq:epsilon_condition} at $i=i_0.$
    Exploiting Lipschitz properties, we write the inequalities
    \begin{align}
        \label{eq:PI_inequalities}
        h(f(x_0)) &\leq h(f(\bar{x}))+L_h L_f \epsilon_{i_0} \\
        & \hspace{-30pt} \leq \underbrace{(1-\bar{\alpha})\left(\frac{L_hL_f+(1-\alpha)L_h}{L_hL_f+(1-\bar{\alpha})L_h}\right)}_{\mathbf{a}} \gamma_{i_0} + \nonumber \\
        & \hspace{50pt} + \underbrace{\left(\frac{(\bar{\alpha}-1)L_h}{L_hL_f+(1-\bar{\alpha})} \right)}_{\mathbf{b}}\delta \nonumber
    \end{align}
    where the first inequality follows from Lipschitz continuity and second inequality follows from~\eqref{eq:verified_condition}, \eqref{eq:epsilon_condition}, and the fact that $h(\bar{x}) \leq \gamma_{i_0}$. Thus, $\mathbf{a}\gamma_{i_0}+\mathbf{b}\delta$ upper bounds $h(x_1)$ for any $x_0 \in S(\gamma_{i_0})$. Repeating the procedure and taking worst case, we have $h(x_t) \leq \gamma_{i_t}$, where
    \begin{align}
    \label{eq:linear_system}
        \gamma_{i_{t+1}}=\mathbf{a}\gamma_{i_t}+\mathbf{b}\delta, && \forall t \geq 0.
    \end{align}
    The proof concludes trivially if $\bar{\alpha}=1$, since it implies $\gamma_{i_t}=0$ always, and $\hat{\gamma}=0$. For $\bar{\alpha} \in [\alpha,1)$, we have $\mathbf{a} \in (0,1)$ since $\mathbf{a}$ is strictly decreasing in $\bar{\alpha}$, with a largest value of $1-\alpha<1$.
    Hence,  System~\eqref{eq:linear_system} generates a monotonically increasing sequence which converges at $\hat{\gamma}$, i.e., $\hat{\gamma}=\mathbf{a}\hat{\gamma}+\mathbf{b}\delta$, concluding the proof.
\end{proof}
This result implies that if our main concern is invariance certification, it is in fact sufficient to certify that
\begin{align}
    r(x) \leq -\bar{\alpha} h(x), && \forall \ x \in S(\hat{\gamma})
\end{align}
for $\hat{\gamma}<0$ defined in~\eqref{eq:constants_definition}, or equivalently setting $\gamma_q = \hat{\gamma}$. Note that $-\delta/\alpha \leq \hat{\gamma}$ always holds, which implies that if verification is performed only at finite samples, then the CI set reaches closer to the boundary of $S(0)$.
\remark{Observe from Proposition \ref{prop:sequence_result} that the invariant set $S(\hat{\gamma})$ is nonempty only if $\delta$ is chosen small enough such that $\hat{\gamma}>\gamma_0$. Thus, the choice of $\delta$ depends on the parameters $\alpha$ and $\bar{\alpha}$, along with the Lipschitz constant $L_f$ of the closed-loop system.}
\subsection{Sample complexity bounds}
We now derive lower bounds on the number of samples required to verify the $\epsilon_i$-net conditions derived in Proposition \ref{prop:main_result} in a probabilistic manner. To this end, we first note that a straightforward method to verify the $\epsilon$-net conditions involves sampling from a simpler set $B_i \supseteq C_i$, and then perform rejection sampling to construct $\mathcal{D}_i$.
\begin{proposition}
\label{prop:box_samples}
    Suppose $B:=[\underline{x},\overline{x}]$, and let $\mathcal{D}:=\{x^i \in B:i \in \{1,\cdots,N\}\}$ consist of axis-aligned samples from $B$. The set $\mathcal{D}$ forms an $\epsilon$-net over $B$ if in each dimension, the grid resolution is $d:=2\epsilon/\sqrt{n}$,
    such that the total number of samples is $N=\prod_{k=1}^n \lceil \frac{\overline{x}_k-\underline{x}_k}{d}\rceil$.
\end{proposition}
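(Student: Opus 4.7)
The plan is to realize $\mathcal{D}$ as the centers of a uniform axis-aligned grid over $B$ with cell width at most $d$ in each coordinate, and then bound the worst-case distance from any point of $B$ to the nearest grid point by a Pythagorean argument.

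First, I would partition each interval $[\underline{x}_k,\overline{x}_k]$ for $k \in \{1,\dots,n\}$ into $N_k := \lceil (\overline{x}_k-\underline{x}_k)/d \rceil$ consecutive sub-intervals, each of width at most $d$. Taking Cartesian products of these one-dimensional partitions produces a partition of $B$ into $N = \prod_{k=1}^{n} N_k$ hyper-rectangles, each with side length at most $d$ in every coordinate. The samples in $\mathcal{D}$ are then chosen as the centers of these hyper-rectangles, which immediately accounts for the sample count stated in the proposition.

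Next, I would verify the $\epsilon$-net property. Given any $x \in B$, let $\bar{x} \in \mathcal{D}$ be the center of the hyper-rectangle containing $x$. By construction, $|x_k-\bar{x}_k| \leq d/2$ for every coordinate $k$, so
\[
\|x-\bar{x}\|_2 \leq \sqrt{\sum_{k=1}^{n} (d/2)^2} \;=\; \tfrac{\sqrt{n}}{2}\,d.
\]
Substituting the prescribed resolution $d = 2\epsilon/\sqrt{n}$ yields $\|x-\bar{x}\|_2 \leq \epsilon$, which is exactly the $\epsilon$-net condition~\eqref{eq:eps_net}.

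The main subtlety is matching the Euclidean half-diagonal $\tfrac{\sqrt{n}}{2}d$ of a cell of side length $d$ to the desired resolution $\epsilon$, which is precisely what the prescription $d = 2\epsilon/\sqrt{n}$ accomplishes; this is the standard relationship between the $\ell_\infty$ and $\ell_2$ radii of a hypercube. Beyond this, the only care needed is in the sample count: the ceiling operation accommodates the possibility that the final cell along each axis is shorter than $d$, which only tightens the diagonal bound and therefore preserves the $\epsilon$-net property.
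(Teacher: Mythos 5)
Your proof is correct. The paper states this proposition without proof, and your argument — centering samples in axis-aligned cells of side at most $d$ and bounding the worst-case distance by the half-diagonal $\tfrac{\sqrt{n}}{2}d = \epsilon$ — is precisely the standard argument the authors leave implicit.
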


While Proposition~\ref{prop:box_samples} is useful in exactly sampling for providing verification certificates, it might result in a very large number of samples depending on the volume of the sets $C_i$. Furthermore, in many applications, it might be sufficient to provide probabilistic certificates, i.e., bounds on the number of samples required to verify~\eqref{eq:epsilon_condition} with a certain probability, resulting in a probabilitic invariance certification of $S(0)$. Denoting $v(x):=r(x)+\alpha h(x)$, we
define the sets containing the states violating the barrier condition as
\begin{align}
\label{eq:violation set}
    \Omega_i:=\{x \in C_i : v(x)>-\delta\}, \ i=\{1,\cdots,q\}.
\end{align}
\begin{proposition}
\label{prop:omega_set}
    Define the largest ball in the set $\Omega_i$ as
    \begin{align}
    \label{eq:largest_ball}
    \hat{\epsilon}_i:=\max_{\epsilon} \{ \epsilon \geq 0 \ : \ \exists \ x \in \Omega_i \ : \ x \oplus \epsilon \mathcal{B} \subseteq \Omega_i\}
\end{align}
for each $i \in \{1,\cdots,q\}$ where $\mathcal{B}$ is the $2$-norm ball in $\R^{n_x}$, and define the complementary set as $C_i^{\mathrm{g}}:=C_i \setminus \Omega_i$. Then, $(i)$ For all $x \in \Omega_i$, there exists $\bar{\rho}>0$ such that for all $\rho \in (0,\bar{\rho})$, there exists some $\bar{x} \in C_i^{\mathrm{g}}$ verifying $\|x-\bar{x}\| \leq \hat{\epsilon}_i + \rho$; $(ii)$ If for a given $\bar{\alpha} \in [\alpha, 1]$, the inequality $\hat{\epsilon_i} \leq \zeta(\bar{\alpha},\delta, \gamma_i)$
holds for all $i \in \{1,\cdots,q\}$, then~\eqref{eq:relaxed_barrier} is verified.
\end{proposition}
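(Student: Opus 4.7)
The plan is to establish Part $(i)$ directly from the extremal definition of $\hat{\epsilon}_i$, and then to use Part $(i)$ in a Lipschitz-chaining argument that mirrors the proof of Proposition~\ref{prop:main_result} in order to conclude Part $(ii)$.

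For Part $(i)$, fix $x \in \Omega_i$ and argue by contradiction: suppose there were a threshold $\bar{\rho} > 0$ such that no $\bar{x} \in C_i^{\mathrm{g}}$ lies within distance $\hat{\epsilon}_i + \rho$ of $x$ for any $\rho \in (0, \bar{\rho})$. Then, because of the disjoint decomposition $C_i = \Omega_i \cup C_i^{\mathrm{g}}$, one has $B(x, \hat{\epsilon}_i + \rho) \cap C_i \subseteq \Omega_i$. For $\rho$ small enough the ball also stays inside $C_i$ (the slab $C_i$ has nonzero thickness around $x$ since $\gamma_{i-1} < \gamma_i$ and $h$ is Lipschitz), and so $B(x, \hat{\epsilon}_i + \rho) \subseteq \Omega_i$. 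This contradicts $\hat{\epsilon}_i$ being the maximum radius in~\eqref{eq:largest_ball}, so the desired $\bar{x}$ must exist for every sufficiently small $\rho$.

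For Part $(ii)$, pick any $x \in S(0) = \cup_{i=1}^{q} C_i$ and the index $i$ with $x \in C_i$. If $x \in C_i^{\mathrm{g}}$, then $v(x) \leq -\delta$ by definition of $C_i^{\mathrm{g}}$, and combining with $h(x) \leq 0$, $\bar{\alpha} \geq \alpha$, and $\delta \geq 0$ immediately yields $r(x) \leq -\bar{\alpha} h(x)$. If instead $x \in \Omega_i$, invoke Part $(i)$ to obtain, for each small $\rho > 0$, some $\bar{x} \in C_i^{\mathrm{g}}$ with $\|x - \bar{x}\| \leq \hat{\epsilon}_i + \rho$. Replay the Lipschitz chain from the proof of Proposition~\ref{prop:main_result} verbatim, but with $\epsilon_i$ replaced by $\hat{\epsilon}_i + \rho$ and with the inequality $v(\bar{x}) \leq -\delta$ (which holds because $\bar{x} \in C_i^{\mathrm{g}}$) playing the role of~\eqref{eq:verified_condition}. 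This produces
$$h(f(x)) - (1-\bar{\alpha}) h(x) \leq (\bar{\alpha} - \alpha) \gamma_i - \delta + (L_h L_f + (1-\bar{\alpha}) L_h)(\hat{\epsilon}_i + \rho).$$
Using $\gamma_i \leq 0$ and the hypothesis $\hat{\epsilon}_i \leq \zeta(\bar{\alpha}, \delta, \gamma_i)$ collapses the constant part of the right-hand side to zero, leaving only a $\rho$-dependent remainder; sending $\rho \downarrow 0$ yields~\eqref{eq:relaxed_barrier} at $x$.

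I expect the boundary subtlety in Part $(i)$ to be the main obstacle: the ball $B(x, \hat{\epsilon}_i + \rho)$ must actually sit inside $C_i$ for the contradiction to go through, and this could fail at points of $\Omega_i$ that are located on the slab boundaries $\{h = \gamma_{i-1}\}$ or $\{h = \gamma_i\}$. The phrasing of the statement with an existential $\bar{\rho} > 0$, rather than all $\rho > 0$, is precisely what accommodates this: by shrinking $\rho$ below the distance from $x$ to those level sets we recover the required containment. Once Part $(i)$ is secured, Part $(ii)$ is a direct reprise of the Lipschitz chain used in Proposition~\ref{prop:main_result} and presents no further difficulty.
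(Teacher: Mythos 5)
Your overall route is the same as the paper's: Part $(i)$ from the extremal definition of $\hat{\epsilon}_i$ (where the paper invokes openness of $\Omega_i$, you argue by contradiction), and Part $(ii)$ by rerunning the Lipschitz chain of Proposition~\ref{prop:main_result} with $\epsilon_i$ replaced by $\hat{\epsilon}_i+\rho$ and with $v(\bar{x})\leq-\delta$ (valid since $\bar{x}\in C_i^{\mathrm{g}}$) standing in for~\eqref{eq:verified_condition}, then letting $\rho\downarrow 0$. Your Part $(ii)$ is correct and considerably more explicit than the paper's one-sentence appeal to ``the same arguments'' and ``continuity as $\rho\to 0$''; the cancellation of the constant terms under $\hat{\epsilon}_i\leq\zeta(\bar{\alpha},\delta,\gamma_i)$ checks out, as does the easy case $x\in C_i^{\mathrm{g}}$.

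The gap is in Part $(i)$, exactly where you suspected, and your proposed patch does not close it. The contradiction requires $B(x,\hat{\epsilon}_i+\rho)\subseteq\Omega_i$, hence $B(x,\hat{\epsilon}_i+\rho)\subseteq C_i$; you obtain this by shrinking $\rho$ below the distance from $x$ to the level sets $\{h=\gamma_{i-1}\}$ and $\{h=\gamma_i\}$. But $\Omega_i=C_i\cap\{v>-\delta\}$ is only relatively open in the closed slab $C_i$, so a point $x\in\Omega_i$ can sit exactly on one of those level sets, where that distance is zero and no positive $\rho$ restores the containment. A one-dimensional picture shows the step genuinely fails there: with $C_i=[-1,0]$ and $\Omega_i=(-0.1,\,0]$, the largest ball contained in $\Omega_i$ has radius $\hat{\epsilon}_i=0.05$, yet $x=0\in\Omega_i$ is at distance $0.1$ from $C_i^{\mathrm{g}}=[-1,-0.1]$, so no $\bar{x}\in C_i^{\mathrm{g}}$ lies within $\hat{\epsilon}_i+\rho$ of $x$ for small $\rho$. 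To be fair, the paper's own proof has the identical blind spot: its assertion that $\Omega_i$ ``is an open set'' is what hides the slab boundary, so you have reproduced the paper's argument faithfully, gap included. A clean repair is to measure the ball in~\eqref{eq:largest_ball} relative to $C_i$, i.e., require $(x\oplus\epsilon\mathcal{B})\cap C_i\subseteq\Omega_i$; with that definition your contradiction goes through at boundary points and Part $(ii)$ follows unchanged. (Both you and the paper also leave implicit the degenerate case $C_i^{\mathrm{g}}=\emptyset$, in which Part $(i)$ has no $\bar{x}$ to offer.)
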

\begin{proof}
    Part $(i)$ follows from the definition of $\hat{\epsilon}_i$ and the fact that $\Omega_i$ is an open set. Part $(ii)$ follows after observing that $v(x) \leq -\delta$ for all $x \in C^{\mathrm{g}}_i$, followed by using the same arguments as Proposition~\ref{prop:main_result} and exploiting continuity of the inequalities in~\eqref{eq:main_inequalities} as $\rho \to 0$.
\end{proof}
We now derive confidence bounds on $\hat{\epsilon}_i$ based on the number of samples $N_i$ composing the sets $\mathcal{D}_i \subset C_i$. 
Suppose that each set $C_i$ is endowed with a uniform probability distribution $\mathcal{U}_i$ with probability measure $\mu_i(\cdot)$. The probability of sampling a state $x \in C_i$ such that $v(x)>-\delta$, denoted as $\mathrm{Pr}_{x \sim \mathcal{U}_i}[v(x)>-\delta]$ is given by the $0$-$1$ risk $\mathrm{e}_i(h):=\mathbb{E}_{x \sim \mathcal{U}_i} [\mathbf{1}(v(x)>-\delta)].$
From the definition of $\Omega_i$, this $0$-$1$ risk is the measure of the set $\Omega_i$, i.e., $\mathrm{e}_i(h)=\mu_i(\Omega_i).$
From~\eqref{eq:verified_condition}, we see that the inequality $v(x)\leq -\delta$ holds over all samples $x \in \mathcal{D}_i$. This implies that empirical risk
\begin{align}
    \label{eq:empirical_risk}
    \hat{\mathrm{e}}_i(h):=\frac{1}{N_i} \sum_{j=1}^{N_i} \mathbf{1}(v(x_j^i)>-\delta) = 0.
\end{align}
\begin{proposition}
\label{prop:main_prob_res}
   Suppose the samples in $\mathcal{D}_i$ are i.i.d. For any $\theta \in (0,1)$, with a confidence of at least $1-\theta$, the $0$-$1$ risk of sampling $x \in \Omega_i$ is upper-bounded as
    \begin{align}
        \label{eq:EBI_res}
        \mathrm{e}_i(h) \leq \kappa_{\theta}(N_i):= \frac{7 \ln(2/\theta)}{3(N_i-1)}. \\ \nonumber
    \end{align}
\end{proposition}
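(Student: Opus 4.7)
The plan is to recognize~\eqref{eq:EBI_res} as an immediate consequence of the empirical Bernstein inequality of Maurer and Pontil, applied to the bounded Bernoulli random variable $Y := \mathbf{1}(v(x) > -\delta)$ induced by drawing $x \sim \mathcal{U}_i$. Since $Y \in \{0,1\} \subset [0,1]$ and the $N_i$ samples in $\mathcal{D}_i$ are i.i.d., the empirical Bernstein inequality guarantees that, with probability at least $1-\theta$,
\begin{equation*}
\mathrm{e}_i(h) - \hat{\mathrm{e}}_i(h) \;\leq\; \sqrt{\frac{2 \hat{V}_i \ln(2/\theta)}{N_i}} \;+\; \frac{7 \ln(2/\theta)}{3(N_i-1)},
\end{equation*}
where $\hat{V}_i$ denotes the unbiased sample variance of $Y$ over $\mathcal{D}_i$.

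The key simplification is then immediate from the hypothesis. By~\eqref{eq:verified_condition}, every sample $x_j^i \in \mathcal{D}_i$ satisfies $v(x_j^i) \leq -\delta$, so $\mathbf{1}(v(x_j^i) > -\delta) = 0$ for every $j$. Thus the empirical risk is $\hat{\mathrm{e}}_i(h) = 0$, as already noted in~\eqref{eq:empirical_risk}, and the unbiased sample variance of a constant-zero sequence is $\hat{V}_i = 0$. Substituting both values into the empirical Bernstein bound makes the square-root term vanish and leaves exactly the claimed bound $\mathrm{e}_i(h) \leq \kappa_{\theta}(N_i) = \tfrac{7 \ln(2/\theta)}{3(N_i-1)}$.

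There is essentially no technical obstacle beyond correctly invoking the concentration inequality. The applicability hypotheses, namely boundedness of $Y$ in $[0,1]$ and i.i.d. sampling, are both granted by the setup of the proposition and by the construction of the indicator, and no additional regularity on $h$ or $v$ is required at this step since they enter only through the already-fixed zero/one random variable $Y$. The only subtlety worth flagging in the writeup is that the bound relies on the empirical variance vanishing, which is specifically a consequence of the realizable situation $\hat{\mathrm{e}}_i(h)=0$; were any sample to violate the decay inequality, the square-root term would re-appear and the clean form of $\kappa_\theta$ would no longer hold.
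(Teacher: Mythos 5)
Your proposal is correct and follows essentially the same route as the paper: the authors likewise invoke the empirical Bernstein bound of Maurer and Pontil (their Theorem 4) and observe that~\eqref{eq:empirical_risk} forces both the empirical mean and the sample variance of $\mathbf{1}(v(x)>-\delta)$ to vanish, so only the $\tfrac{7\ln(2/\theta)}{3(N_i-1)}$ term survives. Your writeup simply spells out the intermediate inequality that the paper leaves implicit.
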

\begin{proof}
    The proof follows from~\cite[Theorem 4]{maurer2009empirical} after observing that~\eqref{eq:empirical_risk} implies empirical mean and variance of the random variable $\mathbf{1}(v(x)>-\delta)$ equal $0$.
\end{proof}

Proposition~\ref{prop:main_prob_res} states that, given the evidence of samples in the set $\mathcal{D}_i$, the probability of encountering some $x \in \Omega_i$ where $v(x)>-\delta$ is upper-bounded by $\kappa_{\theta}(N_i)$. 
Since $\mathrm{e}_i(h)=\mu_i(\Omega_i)$, we have
\begin{align}
\label{eq:vol_ratio_bound}
    \mu_i(\Omega_i) \leq \kappa_{\theta}(N_i)
\end{align}
with a confidence of at least $1-\theta$. Since $\mu_i(\cdot)$ is the measure of the uniform distribution $\mathcal{U}_i$ over $C_i$, $\mu_i(\Omega_i)$ is the ratio of the Lebesgue measures $\mu_{\mathrm{Leb}}(\cdot)$ of the sets $\Omega_i$ and $C_i$, such that $\mu_{\mathrm{Leb}}(\Omega_i) \leq \kappa_{\theta}(N_i) \mu_{\mathrm{Leb}}(C_i)$
with a confidence of at least $1-\theta$. 
The following result from~\cite[Lemma C.1]{boffi2020learning} bounds the largest ball that can be included is subsets of measure less that $\kappa$ inside $C_i$.
\begin{proposition}
\label{prop:ball_size}
Suppose $C_i$ is full-dimensional, and
\begin{align}
\label{eq:ball_problem}
        \tilde{\epsilon}_i(\kappa):=\max\limits_{Z,x,x \oplus \epsilon \mathcal{B} \subseteq Z} \epsilon \ \ \text{s.t.} \ \ Z \subseteq C_i, \ \mu_i(Z) \leq \kappa.
    \end{align}
    Then, the size of the largest ball is bounded as
    \begin{align}
        \tilde{\epsilon}_i(\kappa) \leq \left(\frac{\kappa \mu_{\mathrm{Leb}}(C_i)}{\mu_{\mathrm{Leb}}(\mathcal{B})}\right)^{1/n_x}. \\ \nonumber
    \end{align}
\end{proposition}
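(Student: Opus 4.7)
The plan is to reduce the optimization in \eqref{eq:ball_problem} to a direct volumetric comparison, exploiting the fact that $\mu_i$ is the uniform probability measure on $C_i$. First I would translate the measure constraint: since $\mathcal{U}_i$ is uniform on the full-dimensional set $C_i$, for any measurable $Z\subseteq C_i$ we have
\begin{align*}
\mu_i(Z)=\frac{\mu_{\mathrm{Leb}}(Z)}{\mu_{\mathrm{Leb}}(C_i)},
\end{align*}
so the constraint $\mu_i(Z)\leq \kappa$ in \eqref{eq:ball_problem} is equivalent to $\mu_{\mathrm{Leb}}(Z)\leq \kappa\,\mu_{\mathrm{Leb}}(C_i)$. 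Well-definedness of this identity is exactly where full-dimensionality of $C_i$ is used.

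Next I would exploit the inclusion $x\oplus \epsilon\mathcal{B}\subseteq Z$. Monotonicity of Lebesgue measure gives
\begin{align*}
\mu_{\mathrm{Leb}}(x\oplus\epsilon\mathcal{B})\leq \mu_{\mathrm{Leb}}(Z)\leq \kappa\,\mu_{\mathrm{Leb}}(C_i).
\end{align*}
Then I would use the scaling property of Lebesgue measure in $\mathbb{R}^{n_x}$ together with translation invariance: for the $2$-norm unit ball $\mathcal{B}$,
\begin{align*}
\mu_{\mathrm{Leb}}(x\oplus\epsilon\mathcal{B})=\epsilon^{n_x}\,\mu_{\mathrm{Leb}}(\mathcal{B}).
\end{align*}

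Finally, combining these two displays yields $\epsilon^{n_x}\mu_{\mathrm{Leb}}(\mathcal{B})\leq \kappa\,\mu_{\mathrm{Leb}}(C_i)$, and solving for $\epsilon$ produces the claimed bound. Since the bound was obtained for an arbitrary feasible triple $(Z,x,\epsilon)$ in \eqref{eq:ball_problem}, it transfers to the maximizer $\tilde{\epsilon}_i(\kappa)$.

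There is essentially no hard step here; the result is a clean volumetric squeeze. The only subtlety worth flagging is that the inequality is tight only when $Z$ itself is (essentially) the ball $x\oplus\epsilon\mathcal{B}$, and that the full-dimensionality assumption on $C_i$ is required precisely so that $\mu_{\mathrm{Leb}}(C_i)>0$ and the uniform measure $\mathcal{U}_i$ is well-defined; otherwise the identity linking $\mu_i$ to $\mu_{\mathrm{Leb}}$ degenerates.
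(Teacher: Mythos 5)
Your proof is correct: the paper itself does not prove this proposition but imports it verbatim from Lemma C.1 of the cited reference, and your volumetric squeeze (uniform measure as a Lebesgue-measure ratio, monotonicity over the inclusion $x\oplus\epsilon\mathcal{B}\subseteq Z$, and the scaling $\mu_{\mathrm{Leb}}(x\oplus\epsilon\mathcal{B})=\epsilon^{n_x}\mu_{\mathrm{Leb}}(\mathcal{B})$) is exactly the standard argument behind that lemma. Your remark on where full-dimensionality enters is also the right one; nothing is missing.
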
 

Since we know that the measure of $\Omega_i$ is upper-bounded as in~\eqref{eq:vol_ratio_bound}, it is feasible solution for Problem~\eqref{eq:ball_problem}. Hence, we conclude from Proposition~\ref{prop:ball_size} that $\hat{\epsilon}_i$ as defined in~\eqref{eq:largest_ball} is upper-bounded as
\begin{align}
\label{eq:epsilon_hat_bound}
    \hat{\epsilon}_i \leq \left(\frac{\kappa_{\theta}(N_i) \mu_{\mathrm{Leb}}(C_i)}{\mu_{\mathrm{Leb}}(\mathcal{B})}\right)^{1/n_x}
\end{align}
with a confidence at least $1-\theta$. We now exploit this property to derive number of samples required to probabilistically certify $h(x)$ to be a barrier function.
\begin{proposition}
\label{prop:probabilistic_certification}
    For a given $\bar{\alpha} \in [\alpha,1]$ and any $\theta \in (0,1)$, suppose that the number of i.i.d. samples $N_i$ composing $\mathcal{D}_i \subset C_i$ for each $i \in \{1,\cdots,q\}$ satisfies
    \begin{align}
    \label{eq:sample_bound_prob}
    \left(\frac{\kappa_{\theta}(N_i) \mu_{\mathrm{Leb}}(C_i)}{\mu_{\mathrm{Leb}}(\mathcal{B})}\right)^{1/n_x} \leq \zeta(\bar{\alpha},\delta, \gamma_i)
    \end{align}
    where $\kappa_{\theta}(N_i)$ in~\eqref{eq:EBI_res}, then $h(x)$ satisfies the barrier property in~\eqref{eq:relaxed_barrier} with a confidence of atleast $(1-\theta)^q$. 
\end{proposition}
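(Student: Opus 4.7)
The plan is to fix a single segment index $i \in \{1,\ldots,q\}$ and chain together the three preceding propositions, then invoke independence across segments to aggregate the confidences. First, I would apply Proposition~\ref{prop:main_prob_res} to the i.i.d.\ sample set $\mathcal{D}_i$ of size $N_i$: since~\eqref{eq:verified_condition} holds at every sample in $\mathcal{D}_i$, the empirical $0$--$1$ risk in~\eqref{eq:empirical_risk} is zero, so with confidence at least $1-\theta$ one has $\mu_i(\Omega_i) = \mathrm{e}_i(h) \leq \kappa_\theta(N_i)$. Rewriting this in terms of Lebesgue measure, $\mu_{\mathrm{Leb}}(\Omega_i) \leq \kappa_\theta(N_i)\,\mu_{\mathrm{Leb}}(C_i)$.

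Next, since $\Omega_i \subseteq C_i$ and $\mu_i(\Omega_i) \leq \kappa_\theta(N_i)$, the set $\Omega_i$ is feasible for the optimization problem~\eqref{eq:ball_problem} with $\kappa = \kappa_\theta(N_i)$. Consequently the maximal-ball radius $\hat{\epsilon}_i$ defined in~\eqref{eq:largest_ball} is bounded above by $\tilde{\epsilon}_i(\kappa_\theta(N_i))$, and Proposition~\ref{prop:ball_size} then yields
\begin{align*}
\hat{\epsilon}_i \;\leq\; \left(\frac{\kappa_\theta(N_i)\,\mu_{\mathrm{Leb}}(C_i)}{\mu_{\mathrm{Leb}}(\mathcal{B})}\right)^{1/n_x}
\end{align*}
with the same confidence at least $1-\theta$. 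Combining this with the hypothesis~\eqref{eq:sample_bound_prob} of the proposition delivers $\hat{\epsilon}_i \leq \zeta(\bar{\alpha},\delta,\gamma_i)$ for the index $i$ in question, again with confidence at least $1-\theta$.

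To pass from per-segment certification to the global barrier property, I would argue that the events $\mathcal{E}_i := \{\hat{\epsilon}_i \leq \zeta(\bar{\alpha},\delta,\gamma_i)\}$ are mutually independent, because the datasets $\mathcal{D}_i$ are sampled independently from their respective uniform distributions $\mathcal{U}_i$ on $C_i$. Therefore
\begin{align*}
\mathrm{Pr}\left[\bigcap_{i=1}^{q}\mathcal{E}_i\right] \;=\; \prod_{i=1}^{q}\mathrm{Pr}[\mathcal{E}_i] \;\geq\; (1-\theta)^q.
\end{align*}
On this intersection event, the hypothesis of Proposition~\ref{prop:omega_set}(ii) is met for every $i$, so~\eqref{eq:relaxed_barrier} is verified, concluding the proof.

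The main obstacle is conceptual rather than computational: justifying the product confidence $(1-\theta)^q$ relies crucially on the independence of the samples across distinct segments $C_i$. If one were to instead draw a single pool of samples and partition them into segments a posteriori, the events $\mathcal{E}_i$ would be coupled and only a union-bound confidence $1 - q\theta$ would be defensible. All other steps are routine applications of the already-established Propositions~\ref{prop:main_result},~\ref{prop:omega_set},~\ref{prop:main_prob_res}, and~\ref{prop:ball_size}.
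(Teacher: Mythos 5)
Your proof is correct and follows essentially the same route as the paper: chaining Proposition~\ref{prop:main_prob_res} (zero empirical risk gives $\mu_i(\Omega_i)\leq\kappa_\theta(N_i)$), Proposition~\ref{prop:ball_size} (to obtain the bound~\eqref{eq:epsilon_hat_bound} on $\hat{\epsilon}_i$), and Proposition~\ref{prop:omega_set}(ii), then aggregating the per-segment confidences multiplicatively to get $(1-\theta)^q$. Your explicit remark that the product confidence requires independence of the sample sets across segments (versus a union-bound $1-q\theta$ otherwise) is a useful clarification that the paper leaves implicit, but it does not constitute a different argument.
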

\begin{proof}
    The proof follows from Proposition~\ref{prop:omega_set}, and observing that in each segment $C_i$, the bound in~\eqref{eq:epsilon_hat_bound} holds with a confidence of at least $1-\theta$. 
\end{proof}
From the definition of $\kappa_{\theta}(N_i)$ in \eqref{eq:EBI_res}, we see that \eqref{eq:sample_bound_prob} enforces a lower-bound on the number of iid samples composing the sets $\mathcal{D}_i$. Unfortunately, this bound increases exponentially in the state dimension $n_x$. Hence, future work can focus on deriving tighter bounds, which follows from the $0$-$1$ risk in Proposition \ref{prop:main_prob_res}.
\subsection{Certification algorithm}
\label{sec:certification}
In Algorithm~\ref{alg:cap}, we define a methodology to certify barrier functions. In Step 1, 
we estimate Lipschitz constants $L_h$ and $L_f$ of $h(x)$ and $f(x)$ respectively.
Since $L_h$ is typically given in an explicit functional form, e.g., a feedforward neural network, several established techniques can be used~\cite{szegedy2014,fazlyab2023}. Estimating $L_f$ however depends on the form of System~\eqref{eq:system}. Since we deal with closed-loop systems of the form $x^+ = \mathbf{f}(x,u(x))$ with $u(x)$ being an optimization-based controller, a simple approach is to sample a large number of pairs $\{x^i_1 \in S(0),x^i_2 \in S(0)\}$ for $i \in \{1,\cdots,N_f\}$, and estimate a lower bound to $L_f$ as
\begin{align*}
L_f > \max_{i \in \{1,\cdots,N_f\}} \frac{\|\mathbf{f}(x_1^i,u(x_1^i))-\mathbf{f}(x_2^i,u(x_2^i))\|}{\|x_1^i-x_2^i\|}.
\end{align*}
Alternatively, if Lipschitz constant of $u(x)$ can be computed, then a composition of the controller with the open-loop dynamics can also be used.
The study of methods to estimate these constants is beyond the scope of the current contribution.

Then, in Step 2, we select parameters $\{\gamma_0,\cdots,\gamma_q\}$ verifying~\eqref{eq:gamma_conditions}. To this end, we recall from Proposition~\ref{prop:sequence_result} that selecting $\gamma_q=\hat{\gamma}$ as defined in~\eqref{eq:constants_definition} is sufficient to verify invariance since $S(\hat{\gamma})$ is an invariant set. To select this sequence, the ideal problem to solve is
\begin{align}
    \label{eq:ideal_problem_samples}
    \min_{\gamma_i, i \in \{1,\cdots,q\}} \sum_{i=1}^{q} N_i  \ \text{s.t.} \ \eqref{eq:sample_bound_prob}, \gamma_{i-1}<\gamma_i, \ \gamma_q = \hat{\gamma},
\end{align}
where $N_i$ is the number of samples needed for an eps-ball cover. Unfortunately, solving Problem~\eqref{eq:ideal_problem_samples} exactly is generally intractable. While a feasible solution can be computed by selecting uniformly-spaced $\{\gamma_0,\cdots,\gamma_q\}$, we found that the sequence generated by the system $\gamma_{i+1}=\mathbf{a}\gamma_i+\mathbf{b}\delta$ defined in~\eqref{eq:linear_system} results in a fewer number of samples for verification. Following the selection of a sequence $\{\gamma_0,\cdots,\gamma_q\}$, we can select the sequence $\{\epsilon_1,\cdots,\epsilon_q\}$ to be equal to the upper-bounds in~\eqref{eq:epsilon_condition}.
Following the computation of this sequence, we sample points $\mathcal{D}_i$ verifying this $\epsilon_i$-net condition, and verify~\eqref{eq:verified_condition} at each of the points. Typically, this is a computationally expensive step that can be accelerated on parallelized hardware. Note that if Algorithm \ref{alg:cap} returns failure, it does so after encountering a point at which \eqref{eq:verified_condition} is not verified. Hence, this point can be appended to the training dataset to learn a new candidate barrier function, after warm-starting the solver at the previous solution. 


\begin{algorithm}[t]
\caption{Barrier function certifier}\label{alg:cap}
\begin{algorithmic}
\Require $f(x)$, $h(x)$, $\alpha \in [0,1]$, $\bar{\alpha} \in [\alpha,1]$, $\delta \geq 0$, $\theta \in (0,1)$
\State 1. Estimate $L_h$ and $L_f$
\State 2. Select $\{\gamma_0,\cdots,\gamma_q\}$
\State 3. Compute $\{\epsilon_1,\cdots,\epsilon_q\}$ verifying~\eqref{eq:epsilon_condition}
\State 4. Build sets $\mathcal{D}_i$ in~\eqref{eq:samples} that verify the $\epsilon_i$-bounds
\If{\eqref{eq:verified_condition} is verified for each $i \in \{1,\cdots,q\}$}
    \State Return \textbf{Certification success}
\Else
    \State Return \textbf{Certification fail}
\EndIf
\end{algorithmic}
\label{algorithm}
\end{algorithm}

\section{Numerical results}
We present a numerical example to demonstrate the efficacy of our CBF synthesis and verification schemes, in which we solve Problem~\eqref{eq:learning_problem_penalized} 
using the L-BFGS-B solver~\cite{Byrd1995} in \url{jaxopt} warm-stated by the ADAM solver \cite{kingma2017adammethodstochasticoptimization}, with the QP in~\eqref{eq:u_parameterization} solved using the 
closed-form solution in the case of diagonal matrix $Q$, given by $u(x;\theta_u):=\min(\max(-Q^{-1}c(x;\theta_u),\munderbar{u}),\bar{u}).$
\footnote{The code for reproducing this example can be found on \url{https://github.com/samku/DT-CBF-verification}. Simulations were performed on an Ubuntu 24.04 laptop equipped with Intel i9-14900HX processor, $32$GB of RAM, and Nvidia RTX4080 GPU running Cuda 12.}
We consider the system
\begin{align*}
    \begin{bmatrix} \dot{x}_1 \\ \dot{x}_2 \end{bmatrix} = \begin{bmatrix} x_2+\mathrm{cos}(x_1) \\ (1-x_1^2)x_2-x_1+\mathrm{sin}(x_1)+u \end{bmatrix}
\end{align*}
discretized using the Runge-Kutta 4 scheme with a timestep of $0.1$s. The system is subject to input constraints $U=[-2,2]$, unsafe set $X_{\mathrm{u}}=0.4\mathcal{B} \cup \{x : x^{\top} x \geq 2.8^2\}$ and safe set $X_{\mathrm{s}}=2\mathcal{B} \setminus 1.2\mathcal{B}$.

\begin{figure}[t]
    \centering
    \includegraphics[width=1\linewidth, trim=0.cm 0.cm 0.cm 0.cm, clip]{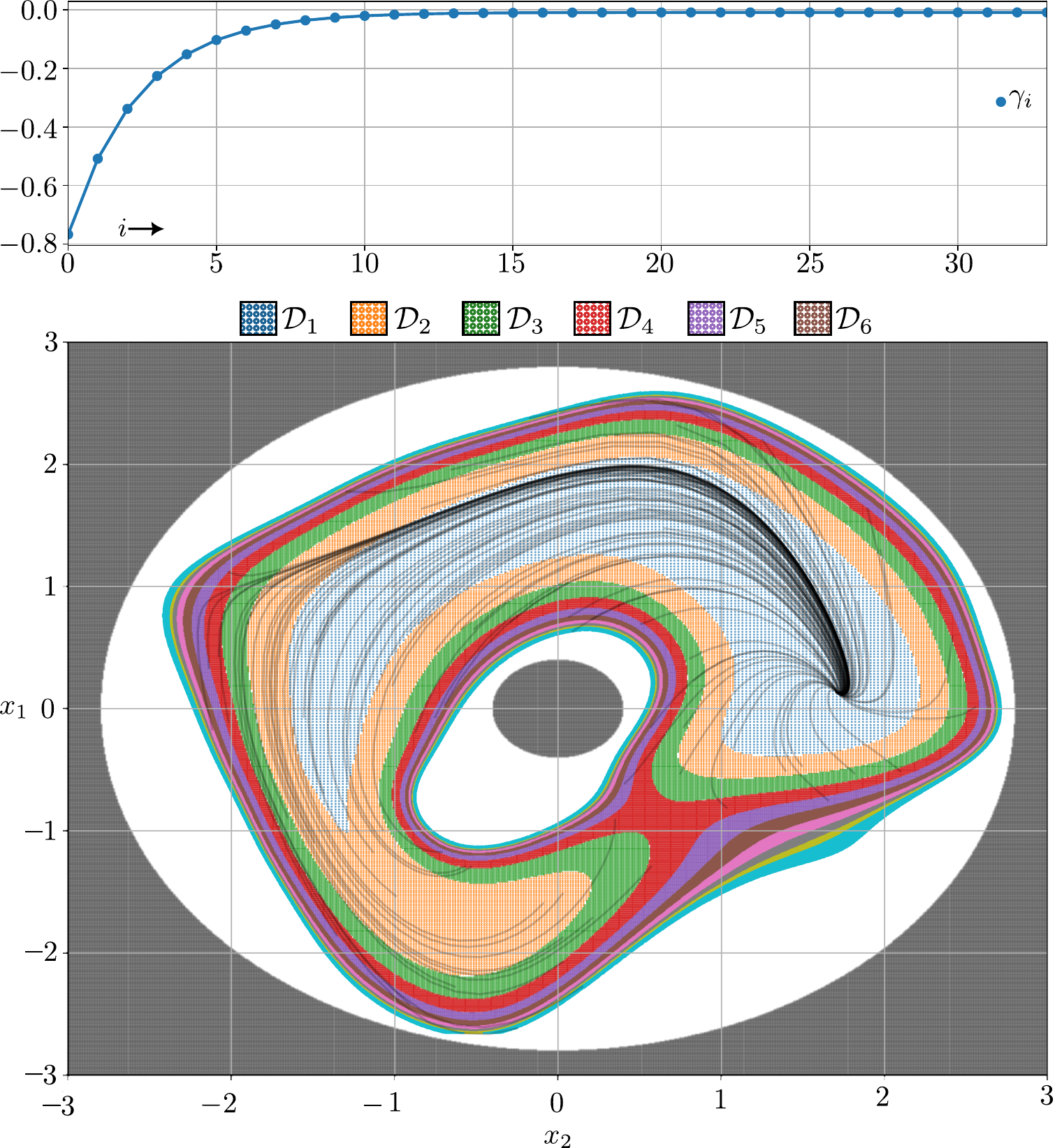}
    \captionsetup{width=\linewidth,justification=justified, singlelinecheck=false}
    \caption{(Top) Sequence $\gamma_i$ generated by~\eqref{eq:linear_system} for $\bar{\alpha}=0.4$; (Bottom) Corresponding sampled-sets $\mathcal{D}_i$, with some colors indicated in the legend. Also shown are the closed-loop trajectories demonstrating invariance. The gray region is $X_{\mathrm{u}}$. }
    \label{fig:Prob_1_fig_2}
\end{figure}

\textbf{Synthesis:}
We solve Problem~\eqref{eq:learning_problem_penalized} using $20000$ total samples from the set $3\mathcal{B}_{\infty}^2$. We parameterize the FNN $h(x;\theta_{\mathrm{h}})$ as a $2$-layer network with $10$ $\mathrm{tanh}$ activation units each, and $c(x;\theta_{\mathrm{u}})$ as a single layer FNN with $10$ $\mathrm{tanh}$ activation units. We select $D:=2.5\mathcal{B}_2^2$ in~\eqref{eq:decay_property}, and formulate Problem~\eqref{eq:learning_problem_penalized} with CBF parameters $\alpha=0.01$, $l_{\mathrm{u}}=0.1$ and $\delta=0.01$, penalty parameters $\tau_{\mathrm{u}}=2$, $\tau_{\mathrm{s}}=1$ and $\tau_{\mathrm{d}}=10$, and regularization parameter $\tau_{\mathrm{r}}=0.00025$. The results of this synthesis problem are plotted in Figure~\ref{fig:Prob_1_fig_2}(Bottom), in which we show the set $S(0)$, along with trajectories of the closed-loop system $x^+ = \mathbf{f}(x,u(x))$ initialized at several points in $S(0)$. We also show the unsafe region $X_{\mathrm{u}}$ demonstrating that the computed set satisfies $S(0) \subset X$.

\textbf{Verification:} To verify the computed CBF, we select $\alpha=0.2$, and follow the steps in Algorithm~\ref{algorithm}. We compute $\gamma_0=-1.003$ by solving $\min_{x \in S(0)} h(x)$. Then, we estimate $L_f = 1.4325$ and $L_h=1.6854$. For a given $\bar{\alpha} \in [\alpha,1)$, we first compute $\hat{\gamma}$ exploiting Proposition \ref{prop:sequence_result}, that we use to define $\gamma_q$. As a benchmark, we set $q=1$, which is equivalent to verifying the whole set $S(\hat{\gamma})$ in one shot. We denote the number of samples required to perform this verification as $N_{\mathrm{base}}$. 

To demonstrate the efficacy of our approach which involves dividing $S(\hat{\gamma})$ into segments defined by $\{\gamma_0,\cdots,\gamma_q\}$, we utilize two strategies to define the sequence. Firstly, we utilize the sequence generated by \eqref{eq:linear_system}. In Figure \ref{fig:Prob_1_fig_2}(Top), we plot the sequence generated with $\bar{\alpha}=0.4$, and in Figure \ref{fig:Prob_1_fig_2}, we show the sampled sets $\mathcal{D}_i$ used for verification. In Table \ref{tab:data_summary}, we report the values of $\hat{\gamma}$, along with the number of segments $q$ and total number of sample $N_{\mathrm{tot}} = \sum_{i=1}^{q} N_i$. The observation that $N_{\mathrm{tot}} < N_{\mathrm{base}}$ validates our approach. We also report the time required for verification, in which we observe that as the number of segments increase, the verification time increases. This is attributed to the fact that we perform rejection sampling, which requires sampling a much larger number of samples than those that contribute to the verification. Future work can focus on effective strategies to reduce this time.
\begin{table}[ht]
\centering
\begin{tabular}{c c c c c c}
\hline
$\bar{\alpha}$ & $N_{\mathrm{base}}$  & $q$ & $\hat{\gamma}$ & $N_{\mathrm{tot}}$ & Time [s] \\
\hline
$0.4$ & $2788210$ & $32$ & $-0.0086$ & $273676$ & $15.0716$ \\
$0.6$ & $2294426$  & $20$ & $-0.00425$ & $139697$ & $9.7489$ \\
$0.8$ & $2064837$  & $11$ & $-0.00168$ & $111581$ & $5.7082$ \\
\hline
\end{tabular}
\captionsetup{width=\linewidth,justification=justified, singlelinecheck=false}
\caption{Verification requirements using \eqref{eq:linear_system} to generate $\{\gamma_0,\cdots,\gamma_q\}$.}
\label{tab:data_summary}
\end{table}

As a second approach, we fix the value of $q$, and select uniformly spaced $\{\gamma_0,\cdots,\gamma_q\}$. In Figure \ref{fig:Prob_1_fig_1}, we plot the variation in number of samples required for different values of $q$ and $\bar{\alpha}$. We observe a large reduction in the number of samples required, validating our approach further. We report that consistently, the choice of $\gamma$ resulting from \eqref{eq:linear_system} results in a reduced total number of samples. However, as reported earlier, this would involve being unable to control the number of segments.

\begin{figure}
    \centering
    \includegraphics[width=1\linewidth, trim=0.cm 0.cm 0.cm 0.cm, clip]{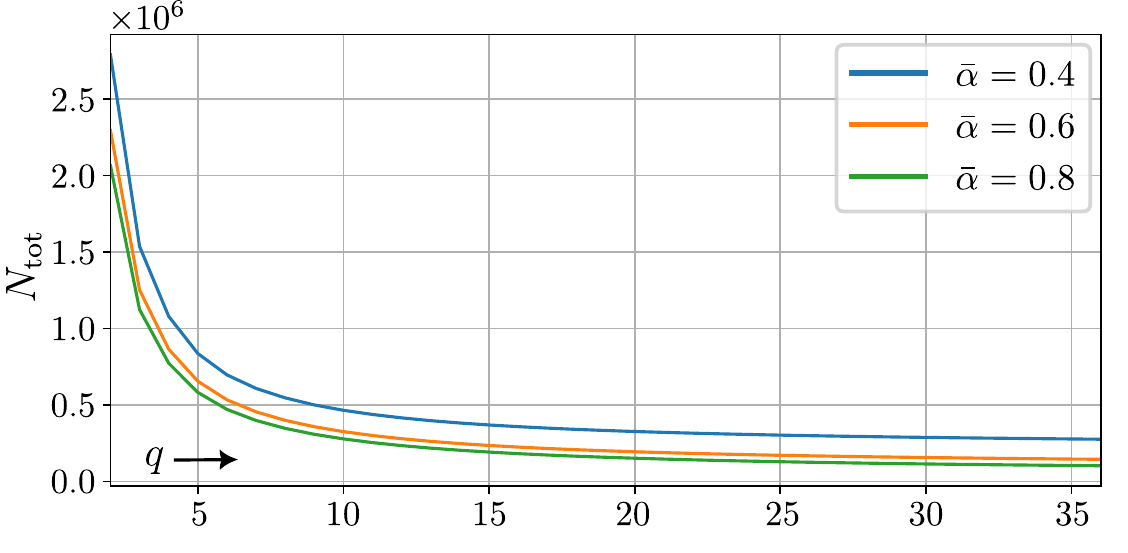}
    \captionsetup{width=\linewidth,justification=justified, singlelinecheck=false}
    \caption{Variation of number of samples required for certification with number of segments $q$, with $\gamma_i$ selected uniformly between $\gamma_0$ and $\hat{\gamma}$.}
    \label{fig:Prob_1_fig_1}
\end{figure}

\section{Conclusions}
In this paper, we have presented an approach to verify DT-CBFs by relying on Lipschitz arguments, with particular focus on certifying invariance of the $0$-sublevel set. To this end, we present a new result that characterizes the sampling density required for certification as a function of the level of the CBFs. This result formalizes the intuition that the sampling density can monotonically decrease as we move further from the boundary of the set. We have formulated a certification algorithm based on this result, that we demonstrate to be more sample-efficient that a one-shot approach. Future work can focus on the development of a synthesis procedure to compute certified-by-construction CBFs, and extending the methodologies to uncertain systems with parameterized CBFs.

\bibliographystyle{plain}
\bibliography{ieee_references}



\end{document}